\newcommand{\bc}{\begin{center}}
\newcommand{\ec}{\end{center}}
\newcommand{\qed}{\hspace*{\fill}
            $\Box$\smallskip}
\newcommand{\thmqed}{\hspace*{\fill}
            $\blacksquare$\smallskip}
\newenvironment{proof}{\noindent {\bf Proof:} \par}
                      {\qed}
\newenvironment{thmproof}{\noindent {\bf Proof:} \par}
                      {\thmqed}
 \newtheorem{theorem}{Theorem}
 \newtheorem{lemma}{Lemma}[section]
 \newtheorem{example}{Example}
 \newtheorem{definition}{Definition}
 \newtheorem{proposition}{Proposition}
\begin{document}

\begin{titlepage}

\bc
{\Large\bf Equilibrium and Potential in Coalitional Congestion Games}%
\footnote{This work is based on Sergey Kuniavsky's M.Sc thesis done
under the supervision of Rann Smorodinsky. Financial support by the
Technion's fund for the promotion of research and the Gordon Center for System Engineering is gratefully acknowledged. Valuable comments by an anonymous referee are gratefully acknowledged.}
\\[0.5cm] by
\\[0.5cm] {\large\bf
Sergey Kuniavsky%
\footnote{Munich Graduate School of Economics, Kaulbachstr. 45, Munich 80539, Germany. Financial support from the Deutsche Forschungsgemeinschaft through GRK 801 is gratefully acknowledged. $<${\tt
Sergey.Kuniavsky@lrz.uni-muenchen.de.}$>$.} and Rann Smorodinsky%
\footnote{Corresponding
author: Faculty of Industrial Engineering and Management, Technion,
Haifa 32000, Israel. $<${\tt rann@ie.technion.ac.il $>$}}}
\\[0.5cm]

\ec

\vspace*{1\baselineskip}

\bc {\bf Abstract} \ec The model of congestion games is widely used
to analyze games related to traffic and communication. A central
property of these games is that they are potential games and hence
posses a pure Nash equilibrium. In reality it is often the case that
some players cooperatively decide on their joint action in order to maximize the coalition's total utility.
This is by modeled by Coalitional Congestion Games. Typical settings include truck drivers who work for
the same shipping company, or routers that belong to the same ISP. The formation of coalitions will typically imply that the resulting coalitional congestion game will no longer posses a pure Nash equilibrium. In this paper we provide conditions under which such games are potential games and posses a pure Nash equilibrium.

\vspace*{2cm}

JEL classification: C$72$

Key words: Congestion games, Equilibrium, Potential, Coalitions
\end{titlepage}
\renewcommand{\thefootnote}{\arabic{footnote}}
\setcounter{footnote}{0}

\section{Introduction}

Congestion games, introduced by Rosenthal \cite{Rosenthal}, form a very natural model for studying many real-life strategic settings: Traffic problems, load balancing, routing, network planning, facility locations and more. In a congestion game players must choose some subset of resources from a given set of resources (e.g., a subset of edges leading from the Source to the Target on a graph). The congestion of a resource is a function of the number of players choosing it and each player seeks to minimize his total congestion accross all chosen resources.
In many modeling instances players and the decision making entity have been thought of as one and the same. However, in a variety of settings this may not be the case.

Consider, for example, a traffic routing game where each driver chooses his route in order
to minimize his travel time, while accounting for congestion along the route caused by other drivers.
Now, in many cases drivers are actually employees
in shipping firms, and in fact it is in the interest of the shipping firm to minimize the total travel time of its fleet.
Similarly, routers in a communication network participate in a congestion game. However, as various routers may belong to the same ISP we are again in a setting where coalitions naturally form. This motivated Fotakis et al. \cite{Fotakis} and Hayrapetyan et al. \cite{Hayra} to introduce the notion of Coalitional Congestion Games (CCG). In a CCG we think about the coalitions as players and each coalition maximizes its total utility.
The coalitional congestion game inherits its structure from the original game,
once the coalitions of players from the original congestion games
(now, becoming the players of the coalitional congestion game) have been identified.

The most notable property of congestion games is that they have posses pure Nash equilibrium. This has been shown by Rosenthal in \cite{Rosenthal}. Later, Monderer and Shapley \cite{Monderer Shapley 96}
formally introduce potential games and show the equivalence of these two classes. The fact that potential games posses a pure Nash equilibrium is straightforward. Unfortunately, the statement that a CCG is a potential game or that it possesses a pure Nash equilibrium is generally false.
In this paper we investigate conditions under which this statement is true. We focus on a subset of congestion games called simple congestion games, where each player is restricted to choose a single resource.

Our main contributions are:

\begin{enumerate}
\item
    Whenever each coalition contains at most two players the CCG induced from a simple congestion game possesses a pure-strategy Nash equilibrium (Theorem \ref{Pairs Pure NE}).
\item
If some coalition contains three players, then there may not exist a pure-strategy
Nash equilibrium (Example \ref{No NE}).
\item
If a the congestion game is not simple then there may not exist
a pure-strategy Nash equilibrium (Example \ref{not SCG underlying});
and
\item
Suppose there exists at least one singleton coalition and at least one coalition composed of two players, then a coalitional congestion game induced from
a simple congestion game is a potential game if and only if cost functions are linear (Theorem \ref{linear costs}).
\end{enumerate}

Our results extend and complement the results in Fotakis et al. \cite{Fotakis} and Hayrapetyan et al. \cite{Hayra}.
For example, Fotakis et al. \cite{Fotakis} show that if the resource cost
functions are linear then the coalitional congestion game is a potential
game. We show that, with some additional mild conditions on the
partition structure, this is also a necessary condition.
Hayrapetyan \cite{Hayra} shows that if the underlying congestion
game is simple and costs are weakly convex then the game possesses a pure
Nash equilibrium. We demonstrate additional settings where this holds.

Section 2 provides a model of a coalitional congestions games, section 3 discusses the conditions for the existence of a pure Nash equilibrium in such games and section 4 discusses the conditions for the existence of a potential function.
All proofs are relegated to an appendix.

\section{Model}

Let $G=\{N,S,U\}$ be a non-cooperative game in strategic form. Let
$C=\{C_1, \ldots, C_{n^c}\}$ be a Partition of $N$ into $n^c$
nonempty sets. Hence: $\cup_{k=1}^{n^c} C_k = N$ and $C_k \cap C_l
= \emptyset \;\; \forall k \neq l \in [1, \ldots n^c]$.

The game $G$ and the partition $C$ form a Coalitional
Non-Cooperative (CNC) Game $G^C = \{N^C,S^c,U^c\}$ defined as
follows:\begin{itemize} \item $N^C$ is the set of agents which are
the elements of $C$. \item The strategy space is $S^c =
\{S^c_k\}_{k \in C}$ where $S^c_k = \times_{i \in C_k}S_i$.
\end{itemize}
Note that $\times S_k = \times_{k \in C}\times_{i \in C_k}S_{k,i}$
is isomorphic to $S=\times_{i=1}^n S_i$, since we only changed the
order of the coordinates. Thus, we can look on $s^c$ as a vector
in $S$.
\begin{itemize}
\item The utility function is defined as follows: $\forall s^c \in
S^c \;\; U^c_k(s^c) = \sum_{ i \in C_k}U_i(s^c)$ and $U^c
=\{U^c_k\}_{k \in C}$.
\end{itemize}

In the context of $G^C$, $G$ is the \emph{Underlying Game} and a
player in $G$ is referred to as a  \emph{sub agent}. As always, a
Pure Nash Equilibrium of the game $G^C$ is a strategy profile
$s \in S^C$ such that $\forall k \in N^C$, $U^c_k(s) \geq
U^c_k(s_{-k},t_k) \;\;\forall t_k \in S^c_k$. $NE(G^C)$ the
(possibly empty) set of Pure Nash equilibrium strategy profiles in $G^C$.

%Let us define an auxiliary function over the partition: $MC(C) =
%max_{k \in C}|C_k|$ - describes the size of the maximal element in
%$N^C$ (largest coalition size).

%For $L' \subseteq L$ let $l$ denote it's component. For a vector
%$x \in \times_{l \in L}X_l$ and $L' \subseteq L$ we use the
%notation $x_{L'}$ to denote the projection of $x$ onto $\times_{l
%\in L'}X_l$ and $x_{-L'}$ onto $\times_{l \not\in L'}X_l$.

A congestion game is a game $G=\{N,R,\Sigma,P\}$ where $N$ is the
finite set agents, $R$ is the finite set resources, $P=\{P_r\}_{r
\in R}$ are the resource costs functions, where $P_r:[1, \ldots, n]
\to \mathbb{R}$ and $\Sigma=\times_{i \in N}\Sigma_i$, where
$\Sigma_i \subseteq 2^R$, is $i$'s strategy space. Agent $i$ selects
$s_i \in \Sigma_i$ and pays $\sum_{r \in s_i}P_r(c(s)_r)$, where
$c(s)_r = \sum_{j \in N}\mathbb{I}_{\{r \in s_j\}}$ is the number of
agents who select $r$ in $s$. In utility terms, the utility of agent
$i$ is $U_i(s) = -\sum_{r \in s_i}P_r(c(s)_r)$. If $\Sigma_i = R
\;\; \forall i \in N$ then $G=\{N,R,\Sigma,P\}$ is called a {\it
Simple Congestion Game}.

We will assume that the $P_r$ functions are non-negative and
increasing ($P_r(1)$ can be zero).

Fix a Simple Congestion Game or Coalitional Congestion Game (CCG) $G$ with $R$ resources and $n$ (sub-)agents. A
\emph{congestion vector} is an element of $\mathbb{N}^R$ whose
elements sum up to $n$. A Simple Congestion Game (CCG) $G$ and a strategy profile $s$
induce a congestion vector $c(s)$: $\{c(s)_r\}_{r \in R}$.

A strategy profile $s$ of a Coalitional Congestion Game $G^C$
induces a private congestion vector for each of the agents in $G^C$.
Such vector for agent $k$ will be $c_k$: $c_k(s_k)_r = |\{i \in C_k
: s_{k,i}=r\}|$ which is an element if $\mathbb{N}^R$ whose elements
sum up to $|C_k|$.

Let $X$ be a subset of the strategy profiles space. We denote
$c(X)$ as the corresponding set of congestion vectors: $\forall X
\subseteq S \;\; c(X) = \{c(s) \;s.t: s \in X \}$

\section{CCG and Pure Nash Equilibria}

The following preliminary result asserts that if the a Nash equilibrium the underlying game is composed of strategies such that all sub-agents of
any agent choose different resources, then it is also an equilibrium of the coalitional game.

\begin{proposition}\label{layer NE}
Let $G$ be a Simple Congestion Game , $C$ a partition of $N$ and $G^C$ be the induced CCG.
Let $s$ be a
strategy profile of $G^C$ where $s_{k,i} \neq s_{k,j} \;\; \forall
k \in N^C, \forall i,j \in C_k$. If $c(s) \in c(NE(G))$ then $s \in NE(G^C)$.
\end{proposition}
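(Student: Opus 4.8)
We need to show that if $s$ is a strategy profile of $G^C$ in which, within each coalition, all sub-agents choose distinct resources, and if the congestion vector $c(s)$ coincides with the congestion vector of some Nash equilibrium of the underlying game $G$, then $s$ is a Nash equilibrium of $G^C$.

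Let me think about what $c(s) \in c(NE(G))$ gives us. It means there is some $\tilde{s} \in NE(G)$ with $c(\tilde{s}) = c(s)$. In a simple congestion game, the cost a single agent $i$ pays at a profile depends only on the resource $i$ picks and the congestion vector. So the key fact I want: in $G$, at a Nash equilibrium $\tilde{s}$, no agent wants to deviate. An agent $i$ currently on resource $r$ pays $P_r(c(\tilde{s})_r)$; deviating to $r'$ would make him pay $P_{r'}(c(\tilde{s})_{r'} + 1)$ (if $r' \neq r$). Nash means $P_r(c(\tilde{s})_r) \le P_{r'}(c(\tilde{s})_{r'}+1)$ for all $r, r'$ such that some agent sits on $r$ — actually, more strongly, in a simple congestion game at equilibrium, for every occupied resource $r$ and every resource $r'$, $P_r(c_r) \le P_{r'}(c_{r'}+1)$. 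This is the crucial "no single-agent improvement" inequality, and it transfers from $\tilde{s}$ to $s$ since only $c(s) = c(\tilde{s})$ matters.

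**Main argument:**

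First I would fix a coalition $k$ and an arbitrary alternative strategy $t_k \in S^c_k$ for it, and estimate $U^c_k(s_{-k}, t_k) - U^c_k(s)$. Write $c = c(s)$. Since the sub-agents of $C_k$ occupy $|C_k|$ distinct resources under $s$, the deviation $t_k$ is a reassignment of $|C_k|$ sub-agents among resources. For each resource $r$, let $a_r = c_k(s_k)_r \in \{0,1\}$ be the old load from coalition $k$ and $b_r = c_k(t_k)_r \ge 0$ the new load; note $\sum_r a_r = \sum_r b_r = |C_k|$. The congestion outside coalition $k$ is $c_r - a_r$, so after deviation resource $r$ carries $c_r - a_r + b_r$ agents. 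The coalition's cost changes from $\sum_r a_r P_r(c_r)$ to $\sum_{r} \sum_{m=1}^{b_r} P_r(c_r - a_r + m)$. The plan is to show the new cost is at least the old one, i.e. the deviation is not profitable.

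The way to do this is a term-by-term / exchange argument. Think of moving sub-agents one at a time from their old resources to their new resources; since old loads are $0$ or $1$ this is a clean bijection-free accounting. When a sub-agent leaves resource $r$ (where $a_r = 1$), the marginal saving is at most $P_r(c_r)$ — actually exactly $P_r(c_r)$ if we process departures first. When a sub-agent arrives at resource $r'$, it is the $j$-th arrival there for some $j \ge 1$, landing on top of at least $c_{r'} - a_{r'}$ remaining agents, so it pays at least $P_{r'}(c_{r'} - a_{r'} + 1) \ge P_{r'}(c_{r'})$ when $a_{r'} = 1$, or $\ge P_{r'}(c_{r'}+1)$ when $a_{r'}=0$ — in either case at least $P_{r'}(c_{r'})$ by monotonicity and $a_{r'} \le 1$, hence at least $\min_{r''} P_{r''}(c_{r''}+1)$... hmm, I need to be a bit more careful. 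The cleanest route: pair up each departure from some $r$ with an arrival to some $r'$ (possible since the totals match, after cancelling the $r = r'$ overlaps where $a_r = b_r = 1$ contributes nothing net beyond possibly extra arrivals). For each such pair, the saving from the departure is $P_r(c_r)$ and the new arrival costs at least $P_{r'}(c_{r'}+1)$ (it lands strictly above the external load, and possibly above earlier arrivals). By the single-agent equilibrium inequality applied at $\tilde s$ (valid since $c(\tilde s) = c$), we have $P_r(c_r) \le P_{r'}(c_{r'}+1)$ because resource $r$ is occupied at $s$ and hence at $\tilde s$. So each arrival-departure pair is a net non-improvement, and any leftover arrivals only add cost. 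Summing, $U^c_k(s_{-k},t_k) \le U^c_k(s)$, which is exactly what we need.

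**The main obstacle:**

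The delicate point is the bookkeeping in the exchange argument: when a resource $r'$ receives several arrivals from coalition $k$, the second, third, \dots arrivals pay even more than $P_{r'}(c_{r'}+1)$, so I must make sure I never "double-charge" savings, and I must handle the resources where $a_r = b_r = 1$ (sub-agent stays put — contributes zero net). I expect the right formalization is: order the resources losing a coalition-$k$ agent as $r_1, \dots, r_m$ and the resource-slots gaining one as $r'_1, \dots, r'_m$ (a slot being a particular "layer" on a resource), match $r_\ell \leftrightarrow r'_\ell$, bound each matched difference by $P_{r_\ell}(c_{r_\ell}) - P_{r'_\ell}(\text{layer}) \le P_{r_\ell}(c_{r_\ell}) - P_{r'_\ell}(c_{r'_\ell}+1) \le 0$ using monotonicity for the first step and the equilibrium inequality for the second. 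Getting this matching stated precisely, together with a clean justification that every occupied resource satisfies the equilibrium inequality against every other resource, is the crux; everything else is routine.
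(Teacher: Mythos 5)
Your plan is correct, but it takes a genuinely different route from the paper's. You bound the cost of an \emph{arbitrary} coalitional deviation $t_k$ directly, by a global exchange argument: decompose the change in $c_k$ into unit departures and unit arrival ``slots,'' charge each departure from $r$ its saving $P_r(c_r)$, lower-bound each net arrival slot on $r'$ by $P_{r'}(c_{r'}+1)$ (it sits strictly above the external load plus the cancelled old layer), and close with the single-agent equilibrium inequality $P_r(c_r)\le P_{r'}(c_{r'}+1)$, which indeed transfers from $\tilde s\in NE(G)$ to $s$ because only the congestion vector matters. The paper instead argues locally, by contradiction, and in two stages: it first introduces an auxiliary \emph{restricted} CCG $\overline{G^C}$ in which each coalition's sub-agents must occupy distinct resources and proves (Lemma \ref{restricted layer NE}) that $s$ is an equilibrium there, by selecting a best reply whose congestion vector is at minimal distance from $c(s_k)$ and exhibiting a weakly better, strictly closer best reply via a single sub-agent swap; it then shows that in the unrestricted game every best reply already spreads sub-agents over distinct resources (again via a one-agent move to an under-loaded resource), and combines the two. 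Your approach buys a single self-contained computation with no auxiliary game and no minimal-distance selection; its cost is exactly the bookkeeping you flag --- in particular you must note, as you do, that the true new cost $\sum_r b_r P_r(c_r-a_r+b_r)$ is only lower-bounded (not equalled) by the layered sum $\sum_r\sum_{m=1}^{b_r}P_r(c_r-a_r+m)$, that a losing resource is never matched to a slot on itself so the equilibrium inequality applies with $r\neq r'$, and that losing resources are occupied at $\tilde s$ so that inequality is available. With that bookkeeping carried out, the argument is complete; there is no gap in the underlying idea.
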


This is key to proving our central result about the existence of a Nash equilibrium in CCGs:

\begin{theorem}\label{Pairs Pure NE}
Let $G$ be a Simple Congestion Game , $C$ a partition where the largest element is of size 2, and $G^C$ a
CCG with the underlying game $G$ and the partition $C$. Then
$NE(G^C) \neq \emptyset$. That is, if the largest coalition is a Pair,
a Pure Nash equilibrium always exists.
\end{theorem}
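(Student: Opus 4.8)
The plan is to build the equilibrium in two layers, using Proposition \ref{layer NE} as the main tool for the ``spread-out'' part and an induction on the number of pair-coalitions for the rest. Write $p$ for the number of pairs and $q$ for the number of singletons, so $n=2p+q$, and call a strategy profile of $G^{C}$ \emph{layered} if the two sub-agents of every pair sit on distinct resources. A first, purely combinatorial, step: a congestion vector $c\in\mathbb{N}^{R}$ with $\sum_{r}c_{r}=n$ is induced by some layered profile if and only if $\max_{r}c_{r}\le p+q$ --- necessity because each coalition puts at most one sub-agent on any resource, sufficiency by the Gale--Ryser theorem (equivalently, a greedy assignment of pairs to the two currently most-demanded resources). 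Hence, if some minimizer $c^{0}$ of Rosenthal's potential $\Phi(c)=\sum_{r\in R}\sum_{j=1}^{c_{r}}P_{r}(j)$ happens to satisfy $\max_{r}c^{0}_{r}\le p+q$, then $c^{0}\in c(NE(G))$, we may realize it by a layered profile $s$, and Proposition \ref{layer NE} gives $s\in NE(G^{C})$ at once. This disposes of every instance in which $G$ admits a potential-minimizing profile that is not too concentrated, e.g.\ when the cost functions are strictly increasing with comparable slopes.

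It remains to treat the regime in which every potential-minimizer of $G$ forces more than $p+q$ sub-agents onto some resource $\bar r$ --- intuitively, $\bar r$ is so cheap, even at high load, that coalitions want to \emph{pool} on it. Here I would argue by induction on $p$, the base case $p=0$ being Rosenthal's theorem. For the inductive step, single out one pair $C_{1}=\{1,2\}$, tentatively place it at a position $x$ (either pooled on a resource $r$, or split on $\{r,r'\}$), and pass to the residual coalitional congestion game on $N\setminus C_{1}$ obtained by raising the costs on the resource(s) $C_{1}$ occupies: replace $P_{r}$ by $P_{r}(\cdot+2)$ if $C_{1}$ pools on $r$, and replace $P_{r}$ by $P_{r}(\cdot+1)$ and $P_{r'}$ by $P_{r'}(\cdot+1)$ if $C_{1}$ splits. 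This residual game still has largest coalition of size $2$ and only $p-1$ pairs, so by the induction hypothesis it has an equilibrium $\widehat s(x)$; gluing $C_{1}$ at $x$ on top of $\widehat s(x)$ yields a profile of $G^{C}$ in which every coalition other than $C_{1}$ is already content. One then selects $x^{*}$ to be the position making $C_{1}$'s own cost smallest among all choices $(x,\widehat s(x))$, and checks that $C_{1}$ has no profitable unilateral deviation against $\widehat s(x^{*})$; this closes the induction.

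The main obstacle is precisely this last check, i.e.\ controlling the pooling pairs. Rosenthal's potential is \emph{not} a potential for $G^{C}$ --- indeed by Theorem \ref{linear costs} no exact potential exists unless the costs are linear --- so neither ``minimize $\Phi$'' nor naive better-response dynamics can be expected to terminate at a $G^{C}$-equilibrium, and the comparison a pooled pair on $r$ must make, $2P_{r}(c_{r})$ versus $P_{r}(c_{r}-1)+P_{t}(c_{t}+1)$, is not aligned with the behaviour of $\Phi$. Making the optimality of $x^{*}$ go through requires a comparative-statics lemma: moving $C_{1}$ to its actual position and letting the residual re-equilibrate never leaves $C_{1}$ worse off than if the residual had equilibrated to a different position of $C_{1}$ --- which rests on the monotonicity of the $P_{r}$'s and on $C_{1}$ being absorbable into a single shifted cost function. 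It is here that the hypothesis ``largest coalition has size exactly two'' is used decisively: a pair pooled on $r$ collapses cleanly into $P_{r}(\cdot+2)$, whereas a triple would leave behind an irreducible internal interaction, which is exactly why the conclusion fails once a coalition of three is allowed (Example \ref{No NE}).
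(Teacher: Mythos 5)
Your first half is sound and essentially matches the paper's Case~1: when some Nash equilibrium of $G$ has congestion at most $|N^C|=p+q$ on every resource, one rearranges sub-agents into a layered profile with the same congestion vector and invokes Proposition~\ref{layer NE}. (The paper does this for an arbitrary $s\in NE(G)$ rather than a Rosenthal minimizer, but that is immaterial.)

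The second half, however, has a genuine gap, and it sits exactly where you flag ``the main obstacle.'' Your induction requires the comparative-statics lemma that $C_1$'s cost at position $y$ against the \emph{frozen} residual profile $\widehat{s}(x^{*})$ is at least its cost at $y$ against the re-equilibrated residual $\widehat{s}(y)$; only then does the chain $\mathrm{cost}(y,\widehat{s}(x^{*}))\ge \mathrm{cost}(y,\widehat{s}(y))\ge \mathrm{cost}(x^{*},\widehat{s}(x^{*}))$ rule out a profitable deviation. You state this lemma but do not prove it, and it is not a routine consequence of monotonicity: when $C_1$ moves to a cheap resource $r$, the residual equilibrium $\widehat{s}(y)$ may place \emph{more} of the other coalitions on $r$ than $\widehat{s}(x^{*})$ did, so re-equilibration can make $C_1$ strictly worse off, reversing the first inequality. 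Since residual equilibria of coalitional games are not governed by Rosenthal's potential (as you yourself note), there is no obvious mechanism forcing the monotonicity you need, and the induction does not close. The paper avoids this entirely: in its Case~2 it starts from a Nash equilibrium of $G$ rearranged so that every coalition has a sub-agent on the most congested resource $r$ and only pooled pairs sit doubled on $r$; it then checks directly that the \emph{only} possible profitable deviation is for such a pooled pair to move one sub-agent off $r$, that executing the most profitable such deviation preserves this structural property, and that the number of pooled pairs on $r$ strictly decreases at each step. This restricted improvement process therefore terminates in finitely many steps at a pure Nash equilibrium of $G^{C}$ --- a direct, potential-free argument that sidesteps the unproved re-equilibration claim on which your induction rests.
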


Does this existence result extend to other partition forms, where the maximal element has
more than two sub-agents? The following example demonstrates that this is not true in general:

\begin{example}\label{No NE}
Consider a game with two identical resources A and B and four sub
agents, with the following payment functions:

\begin{center}
\begin{tabular}{|c|c|c|c|c|}
\hline
Resource / Agents \#: & 1 & 2 & 3 & 4 \\
\hline
A: & 0 & 12 & 16 & 18 \\
\hline
B: & 0 & 12 & 16 & 18 \\
\hline
\end{tabular}
\end{center}

When $C=[\{1,2,3\},\{4\}]$ the matrix form of the resulting 2-player CCG is:
\begin{center}
\begin{tabular}{|c|c|c|}
\hline
$G^{C}$ & A & B\\
\hline
A,A,A & -54, -18 & -48, 0 \\
\hline
A,A,B & -32, -16 & -36, -12  \\
\hline
A,B,B & -36, -12 & -32, -16  \\
\hline
B,B,B & -48, 0 & -54, -18  \\
\hline
\end{tabular}
\end{center}

Whereas the underlying Simple Congestion Game  has a pure Nash equilibrium this CCG has none. To verify this note for the compound agent (made up of 3 sub-agents) the strategies $AAA$ and $BBB$ are dominated. Following their deletion the remaining game is one of matching pennies and hence has no
Pure Nash equilibrium .
\end{example}

Can the result of Theorem \ref{Pairs Pure NE} be extended to CCGs
with small coalition size, but with an underlying congestion game
that is not simple? Again, the answer is negative:

\begin{example}\label{not SCG underlying}
Let $G$ be a Congestion Game with three identical resources and
three agents. Each agent of $G$ chooses two of the three
resources. The cost of each resource is $P(n)=6-\frac{6}{n}$.

Let $C = [\{1,2\}\{3\}]$, and $G^C$ is the CCG with underlying
game $G$ and partition $C$. After omitting identical strategies
due to sub agents symmetry $G^C$ looks as follows:

\begin{center}
\begin{tabular}{|c|c|c|c|}
\hline
$G^C$ & AB & AC & BC\\
\hline
AB,AB & -16,-8  & -14,8 & -14,-4  \\
\hline
AC,AC & -14,-4 & -16,4 & -14,-4  \\
\hline
BC,BC & -14,-4 & -14,-4 & -16,-8  \\
\hline
AB,AC & -11,-7 & -11,-7 & -12,-6  \\
\hline
AB,BC & -11,-7 & -12,-6 & -11,-7  \\
\hline
AC,BC & -12,-6 & -11,-7 & -11,-7  \\
\hline
\end{tabular}
\end{center}
Note that compound agent's strategies (AB,AB), (AC,AC) and
(BC,BC) are dominated. Note that the remaining game has no pure Nash equilibrium.
\end{example}

\section{CCG and Potential}

An {\em Exact Potential} is a function $\mathbb{P}:S \to \mathbb{R}$ satisfying:
\begin{eqnarray}\label{Potential Definition}
\mathbb{P}(s) - \mathbb{P}(s_{-i}, t_i) =
U_i(s) - U_i(s_{-i},
t_i) \\ \nonumber \forall i \in N, \forall t_i \in S_i, \forall s \in S_1
\times S_2 \ldots \times S_n
\end{eqnarray}

Games with a potential function are called {\em Potential Games}. It is well known that potential games have a pure Nash equilibrium (see Monderer and Shapley \cite{Monderer Shapley 96}). In particular Congestion Games are potential games. Fotakis et al \cite{Fotakis} prove that a CCG,
where the cost functions of the resources of the underlying game are linear,  is a potential game.

Removing the linearity assumption is problematic. In fact, even in
the case of a CCG with a maximal coalition of size $2$,
which guarantees the existence of a pure Nash equilibrium (Theorem
\ref{Pairs Pure NE}), the existence of Exact Potential is not
guaranteed. In the following example we show that the existence of a potential function implies linearity of the cost functions:

\begin{example}\label{no potential}
Consider a congestion games with 2 resources, $A$ and $B$, with costs $a_1, a_2, a_3$ and $b_1, b_2, b_3$, correspondingly. Assume there are 3 players and set the coalition structure to  $C=[\{1,2\}\{3\}]$.
This induces the following two player CCG, given in matrix form:

\begin{center}
\begin{tabular}{|c|c|c|}
\hline
$G^{C}$ & A & B\\
\hline
A,A & $2a_3, a_3$ & $2a_2, b_1$ \\
\hline
A,B & $a_2+b_1, a_2$ & $a_1+b_2, b_2$\\
\hline
B,B &  $2b_2, a_1$ & $2b_3, b_3$\\
\hline
\end{tabular}
\end{center}

Assume this game has an exact potential with the following values:
\begin{center}
\begin{tabular}{|c|c|c|}
\hline
$G^{C}$ & A & B\\
\hline
A,A & $P_1$ & $P_2$ \\
\hline
A,B & $P_3$ & $P_4$\\
\hline
B,B &  $P_5$ & $P_6$\\
\hline
\end{tabular}
\end{center}

From the definition of exact potential the following must hold (see, in addition, Theorem 2.9. in Monderer and Shapley \cite{Monderer Shapley 96}):

\begin{eqnarray*}
(a_2+b_1-2a_3)+(a_3-b_1)+(2a_2-a_1-b_2)+(b_2-a_2)=\\ (P_3-P_1)+(P_1-P_2)+(P_2-P_4)+(P_4-P_3) = 0.\\
\end{eqnarray*}
Similarly:
\begin{eqnarray*}a_2+b_1-2b_2+a_1-b_3+2b_3-a_1-b_2+b_2-a_2=0\\
2a_3-2b_2+a_1-b_3+2b_3-2a_2+b_1-a_3=0.
\end{eqnarray*}

Manipulating these equalities leads to:
\begin{eqnarray*}
2a_2=a_1+a_3\\
2b_2=b_1+b_3,
\end{eqnarray*}
which implies that the cost functions are linear.

\end{example}

Using this example we can now prove our final result:

\begin{theorem}\label{linear costs}
Let $G$ be a Simple Congestion Game and let $C$ be a partition that has at least one element of
size 1 and at least one element of size 2. Let $G^C$ be a CCG with
the underlying game $G$ and partition $C$. $G^C$ will posses an
Exact Potential iff the CCG is linear.
\end{theorem}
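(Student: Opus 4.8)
The plan is to prove the two directions separately, with all the real work in the ``only if'' part. For ``if'': when the underlying cost functions are affine, $G^C$ admits an exact potential by the result of Fotakis et al. \cite{Fotakis}, so nothing new is needed. For ``only if'': I would bootstrap Example \ref{no potential}, which already handles the single coalition structure $[\{1,2\}\{3\}]$ on two resources, up to the general statement by a \emph{freeze-and-restrict} argument. Throughout I assume $|R|\ge 2$; the case $|R|=1$ is degenerate (the simple game has a single strategy profile) and should be dispatched separately or excluded by convention.

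Fix a size-$2$ coalition $K_2$ and a size-$1$ coalition $K_1$ of $C$ (these exist by hypothesis), and call the remaining $n-3$ sub-agents the \emph{background} players. Given any resource $r$, any other resource $r'\neq r$, and any level $m$ with $2\le m\le n-1$, I would choose a strategy profile for all coalitions other than $K_1,K_2$ in which exactly $t:=m-2$ background sub-agents sit on $r$ and the rest sit on $r'$ (legitimate, since $0\le m-2\le n-3$ and $G$ is simple, so every sub-agent may pick any resource); write $t'$ for the resulting number of background sub-agents on $r'$. Now consider the two-player game obtained from $G^C$ by freezing those background coalitions at this profile and restricting both $K_2$ and $K_1$ to the resource set $\{r,r'\}$. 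If $G^C$ has an exact potential $\mathbb{P}$, then the restriction of $\mathbb{P}$ is an exact potential for this two-player game (equivalently, the Monderer--Shapley four-cycle conditions \cite{Monderer Shapley 96} for it are a sub-collection of those for $G^C$), so the subgame inherits exact-potential-ness.

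Up to the relabelling $A\mapsto r$, $B\mapsto r'$, and $a_i\mapsto P_r(t+i)$, $b_i\mapsto P_{r'}(t'+i)$ for $i=1,2,3$, this restricted game is exactly the $3\times 2$ game analysed in Example \ref{no potential}: $K_2$'s three non-redundant strategies are ``both on $r$'', ``split'', ``both on $r'$'', the background merely shifts the congestion on $r$ and on $r'$ by $t$ and $t'$, and each payoff entry is computed precisely as there. Hence the cycle computation of Example \ref{no potential} applies verbatim and yields
\[
2P_r(t+2)=P_r(t+1)+P_r(t+3),
\]
i.e.\ $2P_r(m)=P_r(m-1)+P_r(m+1)$. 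Letting $m$ range over $2,\dots,n-1$ and $r$ over all resources, every $P_r$ has vanishing second difference at each interior point of $\{1,\dots,n\}$, hence is affine; that is, the CCG is linear (matching the notion of linearity used in \cite{Fotakis}), which completes the ``only if'' direction.

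I do not expect a serious obstacle here, since the heavy algebra is already contained in Example \ref{no potential}. The only points requiring care are (i) the bookkeeping of the freeze-and-restrict step — verifying that the frozen background contributes only a uniform shift of congestion on $r$ and $r'$ and does not otherwise touch the utilities of $K_1$ and $K_2$ — and (ii) the range check: that the levels $m$ realizable this way are precisely $2,\dots,n-1$, which is exactly what is needed because a function on $\{1,\dots,n\}$ is affine iff its second difference vanishes at $2,\dots,n-1$. A final small item is the degenerate resource set $|R|=1$, and reconciling ``affine'' with whatever intercept convention ``linear CCG'' carries.
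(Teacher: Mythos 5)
Your proposal is correct and follows essentially the same route as the paper: sufficiency via Fotakis et al., and necessity by freezing all coalitions except one singleton and one pair and invoking the two-resource computation of Example \ref{no potential} on the induced subgame. The paper states this reduction in two sentences without the bookkeeping; your version merely spells out the background-congestion shift $t=m-2$, the range $m=2,\dots,n-1$, and the vanishing-second-difference conclusion, all of which check out.
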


\begin{thmproof}
Sufficiency - This has been obtained Fotakis et al. \cite{Fotakis} (Theorem 6).

Necessity - Recall that example \ref{no potential} provides a 3 player congestion game and a coalition structure that yields a CCG for which linear cost function are necessary for the existence of a potential.
The reason that the the linearity extends beyond the example to all situations implied in the theorem is that for any general CCG we can fix the strategy for all but 2 agents, of which one has 2 sub agents and one has a single sub agent. We can now look at the induced 2 player game. If the original game was a potential game so must be the induced game. The example then implies linearity in the induced game. However, as we can arbitrarily fix the strategy for all but the relevant 3 sub agents the result follows.
\end{thmproof}

\begin{appendix}
\section{Appendix - Omitted Proofs}

We begin with the definition of an auxiliary game. Let $G$ be a Simple Congestion Game  and let $C$ be a partition. The {\em Restricted Coalitional Congestion Game}, denoted $\overline{G^C}$, is a CCG where coalitions are restricted strategies such that distinct sub-agents choose distinct resources. More formally:

\begin{definition}
Let $G$ be a Simple Congestion Game  and let $C$ be a partition. The {\em Restricted Coalitional Congestion Game}, denoted $\overline{G^C}$, is the game $\overline{G^C}= \{N^c,\overline{S^c},U^c\}$, where $N^c$ and $U^c$
are as before and $\overline{S^c} = \{\overline{S^c_k}\}_{k \in C}$
where $\overline{S^c_k} = \{\times_{i \in C_k}S_{k,i} : s_{k,i}\neq
s_{k,j} \;\; \forall i,j \in C_k$\}.
\end{definition}

The following result about pure Nash equilibria in restricted coalitional congestion games will be useful for proving our main result:

\begin{lemma}\label{restricted layer NE}
Let $G$ be a Simple Congestion Game  and $C$ a partition of $N$. Let $\overline{G^C}$ be
a Restricted CCG with the underlying game $G$ and the partition $C$.
Let $s$ be a strategy profile of $\overline{G^C}$ (where $s_{k,i}
\neq s_{k,j} \;\; \forall k \in N^C, \forall i,j \in C_k$). If $c(s)
\in c(NE(G)) \Rightarrow s \in NE(\overline{G^C})$
\end{lemma}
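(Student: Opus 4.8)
The plan is to verify the equilibrium condition directly, one coalition at a time, by reducing an arbitrary coalitional deviation to a term-by-term comparison that is controlled by the Nash inequalities of the underlying game $G$.

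First I would unpack the hypothesis $c(s)\in c(NE(G))$. Fix $s^{*}\in NE(G)$ with $c(s^{*})=c(s)$. Since $G$ is simple, every sub-agent's strategy set is all of $R$, so for any sub-agent $j$ with $s^{*}_{j}=r$ the inequality $U_{j}(s^{*})\ge U_{j}(s^{*}_{-j},r')$ gives, for every $r'\neq r$, that $P_{r}(c(s)_{r})\le P_{r'}(c(s)_{r'}+1)$. This holds for every resource $r$ with $c(s)_{r}\ge 1$ and every $r'\neq r$, and is the only consequence of $s^{*}$ being an equilibrium that the argument needs; crucially it depends only on the congestion vector, which is why $c(s)=c(s^{*})$ suffices.

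Next, fix a coalition $C_{k}$ and a deviation $t_{k}\in\overline{S^{c}_{k}}$. Within $C_{k}$ all sub-agents choose distinct resources both in $s$ and in $t_{k}$, so I identify $s_{k}$ and $t_{k}$ with subsets $S,T\subseteq R$ of size $|C_{k}|$, and the coalition utilities become $U^{c}_{k}(s)=-\sum_{r\in S}P_{r}(c(s)_{r})$ and $U^{c}_{k}(s_{-k},t_{k})=-\sum_{r\in T}P_{r}(c')$ with $c'=c(s_{-k},t_{k})$. A short bookkeeping step gives $c'_{r}=c(s)_{r}$ for $r\in S\cap T$, $c'_{r}=c(s)_{r}-1$ for $r\in S\setminus T$, and $c'_{r}=c(s)_{r}+1$ for $r\in T\setminus S$ (the restriction to distinct resources is exactly what makes each resource in $T\setminus S$ gain precisely one unit). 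Cancelling the common contribution of $S\cap T$, the inequality $U^{c}_{k}(s)\ge U^{c}_{k}(s_{-k},t_{k})$ reduces to
\[
\sum_{r\in S\setminus T}P_{r}(c(s)_{r})\ \le\ \sum_{r\in T\setminus S}P_{r}(c(s)_{r}+1).
\]
Now $|S\setminus T|=|T\setminus S|$ since $|S|=|T|$, and every $r\in S\setminus T$ is occupied in $s$, hence $c(s)_{r}\ge 1$. Choosing any bijection $\pi:S\setminus T\to T\setminus S$, applying the underlying-game inequality from the first step to each pair $(r,\pi(r))$ — legitimate since $r\neq\pi(r)$ — and summing over $r\in S\setminus T$ yields the displayed inequality. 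As $C_{k}$ and $t_{k}$ were arbitrary, $s\in NE(\overline{G^{C}})$.

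The only points requiring care are the bookkeeping for $c'$ and invoking the Nash inequality of $G$ with the correct arguments (the old congestion on the vacated resources, old-plus-one on the newly occupied ones); everything else is the routine observation that monotone costs together with equal cardinalities allow vacated resources to be matched one-for-one with newly occupied ones.
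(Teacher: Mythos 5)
Your proof is correct, but it takes a genuinely different route from the paper's. The paper argues by contradiction: assuming $s_k$ is not a best reply, it selects a best reply $t_k$ whose private congestion vector is at minimal distance from $c(s_k)$, locates a single pair of resources $r,x$ on which they differ, and shows that moving one sub-agent from $r$ to $x$ produces another best reply strictly closer to $s_k$ --- contradicting minimality. You instead verify the equilibrium inequality directly for an arbitrary restricted deviation: identifying $s_k$ and $t_k$ with equal-size subsets $S,T\subseteq R$, computing the congestion changes ($-1$ on $S\setminus T$, $+1$ on $T\setminus S$, $0$ elsewhere), cancelling $S\cap T$, and matching vacated resources to newly occupied ones via a bijection so that each pair is controlled by the single-player Nash inequality $P_r(c(s)_r)\le P_{r'}(c(s)_{r'}+1)$ of the underlying game. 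Your bookkeeping is right (the restriction to distinct resources within a coalition is exactly what makes each resource in $T\setminus S$ gain precisely one unit, and $c(s)_r\ge 1$ on $S\setminus T$ guarantees the Nash inequality is available there), and your observation that only the congestion vector of $s^*$ matters justifies passing from $c(s)\in c(NE(G))$ to the needed inequalities. Your argument is shorter and avoids both the contradiction and the minimal-distance machinery; what the paper's exchange-one-sub-agent-at-a-time technique buys is that the same style of local improvement step is reused in the proof of Proposition \ref{layer NE}, where deviations need not keep sub-agents on distinct resources and the clean subset decomposition no longer applies directly.
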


\begin{proof}
For two congestion vectors $u,v$, let $d(u,v) = \frac{\sum_{r\in R} |{v_r-u_r}|}{2}$, denote the distance between these two vectors.

Let $s$ be a strategy profile satisfying the condition of the lemma,
let $k$ be an arbitrary agent in the game $\overline{G^C}$ and
denote by $s_{-k}$ be the strategy profile of all players except
$k$. We denote by $BR(s_{-k})$ the set of $k$'s best reply
strategies to $s_{-k}$. Assume , by way of contradiction, that $s_k
\not \in BR(s_{-k})$ and let $t_k \in BR(s_{-k})$ be a best reply to
$s_{-k}$ which corresponding congestion vector has a minimal
distance to $c(s_k)$. Namely, $d(c(t_k),c(s_k)) \le
d(c(t'_k),c(s_k)) \ \ \forall t'_k \in BR(s_{-k})$.

In $\overline{G^C}$ each agent selects each resource at most once.
As $c_k(t_k) \neq c_k(s_k)$ this implies that there are two resources, say
$r$ and $x$, such that: $(c_k(t_k)_r, c_k(t_k)_x)=(1,0)$ and $(c_k(s_k)_r, c_k(s_k)_x)=(0,1)$ and, in addition, that $
c(s)_r = c(t_k,s_{-k})_r - 1$ and $c(s)_x = c(t_k,s_{-k})_x+1
$.

Let $i$ be the sub-agent of $k$ that chooses $r$ in $t_k$ but chooses a resource different than $r$ in $s_k$. Let $t'_k$ be a strategy for agent $k$, derived from $t_k$ by moving sub
agent $i$ from $r$ to $x$. This results in $c(t'_k,s_{-k})_x=c(s)_x$  and $c(t'_k,s_{-k})_r=c(s)_r$.
By assumption, $c(s) \in NE(G)$. Therefore $P_x(c(s)_x) \leq P_r(c(s)_r+1)$ and so:
\begin{equation}
P_x(c(t'_k,s_{-k})_x)=P_x(c(s)_x) \leq P_r(c(s)_r+1) =
P_r(c(t_k,s_{-k})_r)
\end{equation}

Thus, the contribution of sub agent $i$ to agent $k$'s payment in $(t'_k,s_{-k})$ is less or equal its contribution to $k$'s payment in $(t_k,s_{-k})$. As the only change in $k$'s strategy between $t_k$ and $t'_k$ is $i$'s choice, we conclude that $k$'s payment in $(t'_k,s_{-k})$ is less or equal its payment in $(t_k,s_{-k})$, and so $t'_k \in BR(s_{-k})$. However, by construction $d(c(t'_k),c(s_k)) \le d(c(t_k),c(s_k))$, contradicting the way $t_k$ was chosen.
\end{proof}.

{\bf Proof of Proposition \ref{layer NE}}:

Let $s$ be a profile as described in the Proposition. Let $t_k$ be the
best reply strategy for agent $k$ to $s_{-k}$. We show that
$c_k(t_k)_r \leq 1 \;\forall k \in N^C$ and $\forall r \in R$.

Assume this is not true and that some resource $r$,
$t_{k,i}=t_{k,j}=r$. Since $c_k(s_k)_r \leq 1$ we know that
$c(s_{-k},t_k)_r > c(s)_r$. Therefore there must exist some resource
$x$ such that $c(s_{-k},t_k)_x < c(s)_x$.

Let $t'_k$ be a strategy profile derived from $t_k$ by moving sub
agent $i$ from $r$ to $x$. In the strategy profile $(s_{-k},t'_k)$
agent $i$ pays $P_x(c(s_{-k},t'_k)_x)$. By construction:
\begin{equation}\label{31eq1}
c(s_{-k},t'_k)_x=c(s_{-k},t_k)_x+1 \leq c(s)_x
\end{equation}
From  Equation \ref{31eq1} and the fact that $c(s) \in NE(G)$ we get
that:
\begin{equation}\label{31eq2}
P_x(c(s_{-k},t'_k)_x) = P_x(c(s_{-k},t_k)_x+1) \leq P_x(c(s_x)_x)
\leq P_r(c(s)_r+1)
\end{equation}
Using monotonicity of the cost functions and the fact that $c(s)_r+1
\leq c(s_{-k},t_k)_r$ we get:
\begin{equation}\label{31eq3}
P_r(c(s)_r+1) \leq P_r(c(s_{-k},t_k)_r)
\end{equation}
Combining Equations \ref{31eq2} and \ref{31eq3} we get that
\begin{equation}\label{31eq4}
P_x(c(s_{-k},t'_k)_x) \leq P_r(c(s_{-k},t_k)_r)
\end{equation}

We will now show that $k$ is better off in the strategy profile
$(s_{-k},t'_k)$ than in $(s_{-k},t_k)$, thus contradicting the fact
that $t_k$ is a best response to $s_{-k}$. We do this but analyzing
each of $k$'s sub-agents:
\begin{itemize}
\item
Sub agent $i$ pays in $(s_{-k},t'_k)$, where he chose $x$, no more
than than in $(s_{-k},t_k)$, where he chose $r$ (equation
\ref{31eq4}).
\item
From definition of $x$, $c_k(t_k)_x<c_k(s_k)_x$. Since
$c_k(s_k)_x=1$ and $c_k(s_k)_x>c_k(t_k)_x$, we get that
$c_k(t_k)_x=0$. Thus, apart from agent $i$ no other sub agent of $k$
chose $x$ in $t_k$.
\item
Sub agent $j$, who selects $r$ both in $(s_{-k},t'_k)$ and
$(s_{-k},t_k)$, pays strictly less in $(s_{-k},t'_k)$ than in
$(s_{-k},t_k)$, because $c(s_{-k},t'_k)_r<c(s_{-k},t_k)_r$. This
inequality holds for any other sub agent of $k$ who chose $r$ in
$t_k$
\item
All sub agents who choose a resource in the set $R \setminus\{r,x\}$
pay the same in $(s_{-k},t_k)$ and $(s_{-k},t'_k)$.
\end{itemize}
To conclude, agent $k$ pays strictly less in $(s_{-k},t'_k)$ than in
$(s_{-k},t_k)$. This contradicts the fact that $t_k$ is a best reply
to $s_{-k}$. Therefore, any best reply of $k$ to $s_{-k}$ must be
such that all the sub-agents choose different resources.

Thus, agent $k$'s best reply strategy to $s_{-k}$ is a strategy that
is allowed also in $\overline{G^C}$. We couple this observation with
the observation that $s$ is a Nash equilibrium of $\overline{G^C}$
(follows from Lemma \ref{restricted layer NE}) and the fact that $k$
is arbitrary to conclude that $s\in NE(G^C)$.

QED

{\bf Proof of Theorem \ref{Pairs Pure NE}:}

Let $s$ be an arbitrary Nash equilibrium of $G$

Case 1 - Assume that $c(s)_r \le |N^C|$ for all resources $r \in R$.
In this case we can re-arrange the players over the resources so
that the result will be a strategy profile with the same congestion
vector, and furthermore, for any $k \in N^C$ its two sub agents, $i,
j$, choose different resources. The resulting vector is also in
$NE(G)$ and complies with the conditions in Proposition \ref{layer NE}.
Therefore that proposition suggests that $s$ is a Nash equilibrium of
$G^C$.

Case 2 - Let us denote by $r$ the resource with the highest
congestion in $s$ and assume $c(s)_r>|N^C|$. We argue that without
loss of generality (by rearranging the players) $s$ has the
following two properties: (a) if agent $k$ has its 2 sub agents on
the same resource then it must be the case that the corresponding
resource is $r$, that is, $\forall k \in N^C$ $c(s_k)_{r'}>1$
implies $r'=r$; and (b) all agents have at least one sub-agent
choose $r$.

Note some properties of the strategy tuple $s$:
\begin{enumerate}
\item Let $k$ be an agent with a single sub agent. Then this sub agent must
be on $r$ and it has no profitable
deviation.

\item Let $k$ be an agent with a two sub agents, $i$ and $j$. Assume $i$
is on $r$ and $j$ is on some $r'\not = r$. Moving a single subagent cannot be
profitable.

\item Let $k$ be an agent with two sub-agents, $i$ and $j$. Assume $i$
is on $r$ and $j$ is on some $r'\not = r$. Moving both sub agents simultaneously cannot be profitable as at least one of these moves makes $k$ worse off, while the other cannot improve $k$
payoff.

\item Let $k$ be an agent with two sub-agents, $i$ and $j$, both on $r$.
Moving both sub agents cannot be
profitable.
\end{enumerate}

Thus, if $s$ is not a NE of $G^c$, the only profitable deviation possible is for an agent $k$ with two
sub-agents, $i$ and $j$, both on $r$, to move one sub-agent, say $j$ to another resource, say $r'$.
Furthermore, let us assume that this is the most profitable deviation for $k$. That is
$P_{r'}(c(s_{r'})+1) \leq P_{r''}(c(s_{r''})+1)$ for all $r'' \neq
r$. We denote the resulting strategy profile by $s'$. Note the
properties of $s'$:

\begin{enumerate}
\item All agents with a single sub-agent choose the resource $r$.
\item All agents with two sub agents, have at lease one sub agent in $r$.
\item The payment of all sub agents in $r$ is lower compared with $s$, while the payment of all subagents in $R\setminus \{r\}$ is at least as large compared with the payment in $s$.
\item Any agent with two sub agents, one on $r$ and one on some $r'' \not = r$ pays (weakly) less than what $k$ paid in $s$
\end{enumerate}

Assume $s'$ is not a Nash equilibrium, then there must be some profitable deviation. What are the possible profitable deviations?

\begin{enumerate}
\item Let $k'$ be an agent with a single sub agent. Then this sub agent must be on $r$ and it has no profitable deviation. Recall that the payment of $k$ is $s'$ is lower than the payment of $k'$ is $s$.

\item Let $k'$ be an agent with a two sub agents, $i$ and $j$. Assume $i$ is on $r$ and $j$ is on some $r'\not = r$. Moving $i$ cannot be profitable for the same argument as above. Moving $j$ to another resource in $R\setminus \{r\}$ cannot be profitable, so the only profitable deviation might be moving $j$ back to $r$. However, this would result $k'$ paying the same payment that $k$ paid in $s$, which by construction of $s'$ is higher than what $k$ pays in $s'$. Implying that $k'$ had a profitable deviation in $s$, thus contradicting what we already know.

\item Let $k$ be an agent with two sub-agents, $i$ and $j$. Assume $i$ is on $r$ and $j$ is on some $r'\not = r$. Moving both sub agents simultaneously cannot be profitable as at least one of these moves makes $k$ worse off, while the other cannot improve $k$ payoff.

\item Let $k$ be an agent with two sub-agents, $i$ and $j$, both on $r$. Moving both sub agents cannot be profitable.
\end{enumerate}

Once again, the only profitable deviation possible is for an agent $k'$ with
two sub-agents both on $r$, to move one sub-agent. The resulting
strategy profile $s''$, once more, only allows for profitable
deviations of the same form. Namely, for an agent $k'$ with two
sub-agents both on $r$, to move one sub-agent. We continue
iteratively in the same manner. As the process is bounded
by the number of agents selecting $r$ with both sub agents
in $s$, it must end in finitely many steps. The final strategy vector has no profitable deviations and is, therefore, a Nash equilibrium of the game.

QED

\end{appendix}


\begin{thebibliography}{label}

\bibitem{Fotakis05} Fotakis D., Kontogiannis S.,
Spiraklis P. ``Symmetry in Network Congestion Games: Pure
equilibria and Anarchy Cost," {\it Workshop on Approximation
and Online Algorithms (WAOA)}, pp. 161-175, 2006.

\bibitem{Fotakis} Fotakis D., Kontogiannis S.,
Spiraklis P. ``Atomic Congestion Games Among Coalitions,"
{\it  International Colloquium on Automata, Languages and Programming (ICALP)}, pp. 573-584, 2006.

\bibitem{Hayra} Hayrapetyan A., Tardos E., Wexler T.
``The Effect of Collusion in Congestion Games," {\it 38th annual
ACM symposium on Theory of computing (STOC)}, pp. 89-98, 2006

\bibitem{Monderer Shapley 96}
Monderer D. and Shapley L.S. ``Potential Games," {\it Games and
Economic Behavior}, 14, pp. 124-143, 1996.

\bibitem{Rosenthal} Rosenthal R.W. ``A class of games possesing Pure-Strategy Nash
Equilibria", {\it International Journal of Game Theory}, 2, pp. 65-67, 1973.

\end{thebibliography}
\end{document}